\theoremstyle{plain}
\newtheorem{theorem}{Theorem}
\newtheorem{lemma}[theorem]{Lemma}
\newtheorem{proposition}[theorem]{Proposition}
\theoremstyle{definition}
\theoremstyle{remark}
\newtheorem{remark}[theorem]{Remark}
\begin{document}

\title{Reconstruction of Correlated Sources with Energy Harvesting Constraints in Delay-constrained and Delay-tolerant Communication Scenarios}
\author{Miguel Calvo-Fullana, Javier Matamoros, and Carles Ant\'on-Haro
\thanks{
This work was partly sponsored by the Catalan Government under grant SGR2014-1567, the Spanish Government under projects PCIN-2013-027 (E-CROPS) and TEC2013-44591-P (INTENSYV), and the European Commission under Grant Agreement 318306 (NEWCOM\#).

The authors are with the Centre Tecnol\`ogic de Telecomunicacions de Catalunya (CTTC/CERCA), 08860 Castelldefels, Barcelona, Spain (e-mail: \{miguel.calvo, javier.matamoros, carles.anton\}@cttc.cat).
}
}

\maketitle

\begin{abstract}
In this paper, we investigate the reconstruction of time-correlated sources in a point-to-point communications scenario comprising an energy-harvesting sensor and a Fusion Center (FC). Our goal is to minimize the average distortion in the reconstructed observations by using data from previously encoded sources as side information. First, we analyze a delay-constrained scenario, where the sources must be reconstructed before the next time slot. We formulate the problem in a convex optimization framework and derive the optimal transmission (i.e., power and rate allocation) policy. To solve this problem, we propose an iterative algorithm based on the subgradient method. Interestingly, the solution to the problem consists of a coupling between a two-dimensional directional water-filling algorithm (for power allocation) and a reverse water-filling algorithm (for rate allocation). Then we find a more general solution to this problem in a delay-tolerant scenario where the time horizon for source reconstruction is extended to multiple time slots. Finally, we provide some numerical results that illustrate the impact of delay and correlation in the power and rate allocation policies, and in the resulting reconstruction distortion. We also discuss the performance gap exhibited by a heuristic \emph{online} policy derived from the optimal (offline) one.
\end{abstract}

\IEEEpeerreviewmaketitle

\section{Introduction}
\label{sec:Introduction}

Sensor nodes are usually powered by batteries which can be costly, difficult or even impossible to replace (e.g., when nodes are deployed in remote locations). In recent years, energy harvesting has emerged as a technology capable of overcoming (or, at least, alleviating) the limitations imposed by non-rechargeable batteries. Specifically, nodes equipped with an energy harvesting device are capable of scavenging e.g., solar, wind, thermal, kinetic energy from the environment \cite{vullers2010energy} and, by doing so, extend their operational lifetime.

Energy harvesting has received considerable attention by the wireless communications and information theory communities (see \cite{ulukus2015energy} and references therein for an overview of current advances). For point-to-point scenarios, and under the assumption of known energy and data arrivals (\textit{offline} optimization), the main focus has been on the derivation of optimal transmission strategies at the sensor node. In \cite{yang2012optimal}, the authors study the problem of minimizing the time by which all data packets are transmitted to the destination. A number of authors go one step beyond and investigate the impact of \emph{finite} energy storage capacity \cite{tutuncuoglu2012optimum} or battery leakage \cite{devillers2012general}; generalize the analysis to fading channels \cite{ozel2011transmission}; or explicitly take into consideration the energy needed for data processing (in addition to data transmission) \cite{orhan2013optimal}. Further, other communication scenarios have been investigated such as the broadcast \cite{ozel2012optimal} or the multiple access \cite{yang2012optimalMAC} channels.

For uncoded transmissions, \cite{cui2007estimation} investigates a number of energy-related aspects in a context of wireless sensor networks for parameter estimation. As for the coded case, in \cite{draper2004side} the authors generalize Wyner-Ziv's source coding strategies with side information \cite{wyner1976rate} to tree-structured sensor networks.

Several aspects of source and channel coding have been analyzed in \emph{energy harvesting} scenarios. A point-to-point case was studied in \cite{castiglione2012energy}, where rate-distortion allocation is optimized for stationary energy arrivals under data queue stability. These results were extended in \cite{castiglione2014energy} to the case of \emph{finite} energy and data buffers. Besides, the multi-hop scenario was studied in \cite{tapparello2012dynamic} with \emph{correlated} sources and \emph{distributed} source coding. From a finite-horizon point of view, in \cite{orhan2013delay} the problem of minimizing the reconstruction distortion of a Gaussian source is considered.

\subsection{Contribution}

In this paper, we investigate the reconstruction of time-correlated
sources in a point-to-point communications scenario\footnote{Other communication scenarios such as multiple-access (MAC) channels, which are indeed relevant for wireless sensor networks, are left for future work in this area.}. As in \cite{yang2012optimal,orhan2013delay,ozel2012optimal,devillers2012general} we assume that energy arrivals are non-causally known, thus taking an \emph{offline} optimization approach to the problem. Hence, the
solution turns out to be a benchmark against which \emph{any} online policy can be compared (we also propose one heuristic online policy here). Overall, the main contributions of this work are as follows
\begin{itemize}
\item We consider \emph{time-correlated} sources. The introduction of temporal correlation in the sources is particularly relevant for video coding applications \cite{puri2006distributed} since, in this case, images (i.e., sources) in consecutive frames are clearly correlated. Video source coding has been widely investigated in the literature \cite{ishwar2004decoder,ma2011delayed,yang2011rate}. In \cite{ishwar2004decoder} the authors model video signals as a sequence of time-correlated (correlation given by a first-order auto-regressive process) spatially independent and identically distributed Gaussian processes (namely, frames). Such correlation model, which we adopt in this work, is illustrated in Fig. \ref{fig:systemModelExample}. Still, more general correlation models are also available \cite{yang2011rate}.
\item We consider \emph{delay-tolerant} reconstruction scenarios. In other works \cite{ma2011delayed}, the authors analyze the impact of a \emph{delay-tolerant} reconstruction of the correlated source. Those studies, however, were conducted in scenarios without energy harvesting. Our work goes one step beyond and incorporates energy harvesting constraints (in the sensor node) in the source coding process itself. Consequently, the closed-form expressions that we derive for the power and rate allocation policies explicitly take correlation into account. In this respect, we generalize the results of \cite{orhan2013delay} to the correlated case.
\item In contrast to previous works, we further leverage on \emph{side information}-aware coding strategies for WSNs \cite{draper2004side} to exploit correlation. First, we study the \emph{delay-constrained} case \cite{calvo2015reconstruction} in which the source must be reconstructed at the FC before the next time slot. Then we generalize our study to the delay-tolerant case\textemdash where the time horizon for source reconstruction is extended to multiple time slots. For both cases, we derive the optimal transmission policy which minimizes the average reconstruction distortion at the destination. Our policy reverts to that of \cite{orhan2013delay} for uncorrelated sources, and to that of \cite{yang2012optimal} for the uncorrelated and delay-constrained case. In order to compute this rate and power allocation policy, we propose an iterative algorithm based on the subgradient method \cite{bertsekas1999nonlinear}. Interestingly, we show that the procedure encompasses the interaction between a two-directional directional water-filing and a reverse water-filling \cite[Chapter 10]{cover2012elements} schemes.
\end{itemize}

Besides, we provide extensive numerical results which illustrate the impact of correlation and delay in the transmit policy and the resulting reconstruction distortion. 

\begin{figure}[t]
    \centering
    \includegraphics[width=\columnwidth]{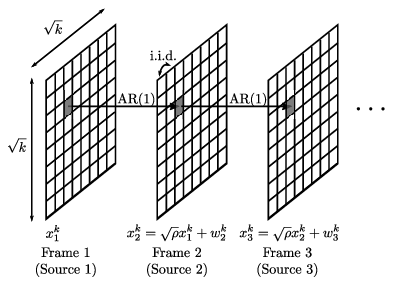} 
    \caption{Temporal and spatial correlation models in video coding.}
    \label{fig:systemModelExample}
\end{figure}

The remainder of this paper is organized as follows. In Section \ref{sec:SystemModel} we introduce the system model and provide details on the encoding process. In Section \ref{sec:MinDistd1}, we address the distortion minimization problem for the delay-constrained case. We formulate the problem as a convex program and derive the optimal power and rate allocation policy. In order to compute this resulting transmission policy, we propose in Section \ref{sec:MinDistd1Alg} an iterative algorithm based on the subgradient method. Next, in Section \ref{sec:MinDist} we generalize the problem (and the solution) to the delay-tolerant case. We provide numerical results in Section \ref{sec:NumRes}, where the effect of correlation as well as delay on the transmit policy and the resulting distortion are assessed. Finally, Section \ref{sec:Conclusions} closes the paper by providing some concluding remarks.

\section{System Model}
\label{sec:SystemModel}
\begin{figure}[t]
    \centering
    \includegraphics[width=\columnwidth]{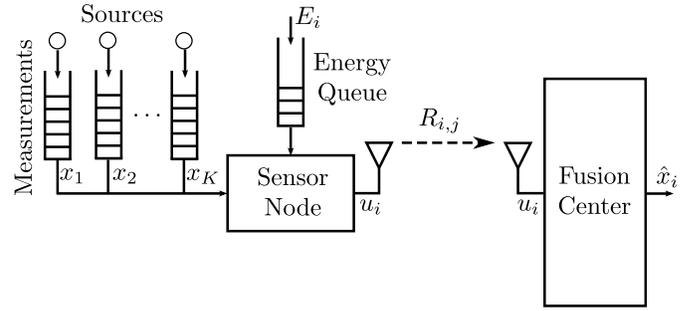}
    \caption{System Model.} 
    \label{fig:systemModel}
\end{figure}

Consider the point-to-point communications scenario depicted in Fig. \ref{fig:systemModel} which comprises one energy-harvesting (EH) sensor and one Fusion Center (FC). We adopt a slotted transmission model, with $K$ denoting the total number of time slots. The sensor measures a time-varying phenomenon of interest which, in the sequel, we model by multiple correlated and memoryless Gaussian wide-sense stationary sources (see rationale in the preceding section). Specifically, each source models the phenomenon in a given time slot. In the $k$-th time slot, the sensor node (i) collects a \emph{large} number of independent and identically distributed (i.i.d.) samples from the $k$-th source; and (ii) encodes those measurements. The encoded data is then transmitted to the FC in $d$ consecutive time slots.

In this work, we consider both delay-constrained ($d=1$) and delay-tolerant ($d>1$) communication scenarios. Clearly, in delay-tolerant scenarios the encoded data transmitted in a given time slot corresponds to multiple sources, as Figure \ref{fig:systemModelTX} illustrates. Let $R_{i,j}$ denote the average transmission rate assigned to the encoded samples of the $j$-th source in the $i$-th time slot. Necessarily, the sum-rate in the $i$-th time slot is upper bounded by the channel capacity\footnote{For the ease of notation, we let the number of channel uses to be equal to the number of samples collected in a given time slot. This number, in turn, is assumed to be large enough to satisfy Shannon's source coding theorem.}, namely,
\begin{align}
\sum\limits_{j=i-d+1}^{i}R_{i,j} \leq \log\left(1+|h_i|^2
p_i\right), \quad\quad i=1,\dots, K, \label{eq:CapacityConstraint}
\end{align}
with $|h_i|^2$ and $p_i$ standing for the channel gain and average transmit power in time slot $i$, respectively (channel noise is assumed to be Gaussian-distributed, with zero-mean and unit variance). The $n$ i.i.d. samples collected by the sensor node from the $i$-th source will be denoted in the sequel by $\{x_i^k\}_{k=1}^{n}$. Such samples, we assume, are correlated over time slots through a first-order autoregressive process. Hence, for the $k$-th sample from the $i$-th source we have that
\begin{align}
x^{k}_{i}&=\sqrt{\rho} x^{k}_{i-1}+w^{k}_{i},&       k=1,\dots,n,
\atop i=1,\dots,K,
\end{align}
with $\rho= \mathbb{E}\left[x^{k}_{i}x^{k}_{i-1}\right]$ denoting the correlation coefficient, and $w^{k}_{i}$ standing for an i.i.d. zero-mean Gaussian random variable with variance $\sigma_w^2=(1-\rho)\sigma_x^2$.

\begin{figure}[t]
    \centering
    \includegraphics[width=\columnwidth]{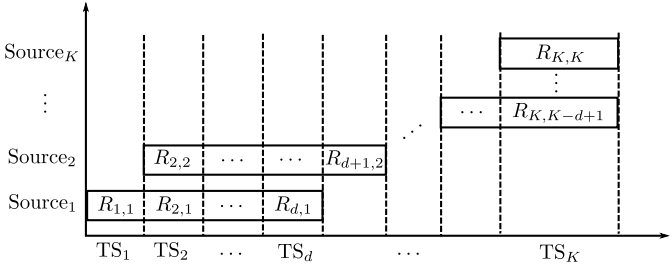}
    \caption{Simultaneous transmission of source measurements.}
    \label{fig:systemModelTX}
\end{figure}

As for the underlying energy harvesting process, we model it as a counting process \cite{yang2012optimal,tutuncuoglu2012optimum} with packet energy arrivals of $E_{i}$ Joules at the beginning of time slot $i$. For simplicity, we assume that energy can be stored in a rechargeable battery of infinite capacity. By considering transmit power as the only energy cost, any transmission (power allocation) policy $\{p_{i}\}$  at the sensor node must satisfy the following energy causality constraint:
\begin{align}
T_{s}\sum_{j=1}^{i}p_{j}\leq\sum_{j=1}^{i}E_{j},        \quad\quad\quad i=1,\dots,K,
\label{eq:ECC}
\end{align}
where $T_{s}$ denotes the duration of the time slot which, in the sequel, we normalize (i.e., $T_{s}=1$).
\begin{remark}
There exist more sophisticated power consumption models encompassing non-ideal circuit power consumption effects \cite{xu2014throughput}  or the impact of processing power \cite{orhan2013optimal} (See also references \cite{cui2005energy} and \cite{wei2015full} for more insight into these matters). In this work, for simplicity, we restrict ourselves to \emph{transmit} power consumption. Nonetheless, the proposed model easily adapts to a constant circuit power consumption. Any model of the form $T_s\sum_{j=1}^i p_j+$ $T_s\sum_{j=1}^i  P^c_i \leq \sum_{j=1}^i E_j$, with  $P^c_i$ being the circuit power consumption at the $i$-th time slot, can be rewritten as $T_s\sum_{j=1}^i p_j \leq \sum_{j=1}^i \bar{E}_j$, where we have defined a new energy harvesting process as $\bar{E}_i=E_i - T_s P^c_i$. Using this new energy harvesting process, the proposed framework can be used to obtain policies adapted to a constant circuit power consumption.
\end{remark}

Our goal is to reconstruct at the FC the sequence of measurements $\{x_i^k\}_{k=1}^{n}$ of each source in up to $d$ time slots since they were collected. Due to the continuous-valued nature of the sources and the rate constraint \eqref{eq:CapacityConstraint}, the reconstructed measurements  $\{\hat{x}_i^k\}_{k=1}^{n}$ will be unavoidably subject to some distortion. Such distortion will be characterized by a Mean Squared Error (MSE) metric:
\begin{align}
D_{i}=\frac{1}{n}\sum\limits_{k=1}^{n}\left(x^k_i - \hat{x}^k_i\right)^2,       \quad\quad\quad i=1,\dots,K.
\label{eq:distMetric}
\end{align}

\subsection{Source Coding and Distortion}

Hereinafter, we assume separability of source and channel coding at the sensor node. Hence, $\{x_i^k\}_{k=1}^{n}$ can be first encoded into a length-$n$ codeword (with a sufficiently large $n$) given by $\{u_i^k\}_{k=1}^{n}$. This process, as in \cite{ishwar2005rate}, can be modeled as
\begin{align}
u_{i}&=x_{i}+z_{i},     \quad\quad\quad i=1,\dots,K,
\label{eq:auxVar}
\end{align}
where $z_{i}$ denotes i.i.d. zero-mean Gaussian random noise of variance $\sigma_{z_i}^2$, which plays the role of encoding noise (the sample index has been omitted here for brevity). We know that, in order to decode the received data, the FC will exploit the available side information (i.e., all the preceding $u_{i}$). Hence, the sum of the (average) encoding rates per sample for the $i$-th source over the $d$ consecutive time slots must satisfy \cite{cover2012elements}
\begin{align}
\sum\limits_{j=i}^{i+d-1}R_{j,i}&  \geq I(x_i ; u_i | u_1 ,\dots, u_{i-1}),     \quad i=1,\dots,K,
\label{eq:Rate1}
\end{align}
where $I(\cdot;\cdot|\cdot)$ stands for the conditional mutual information. From \eqref{eq:auxVar}, this last expression can be rewritten as
\begin{align}
I(x_i ; u_i | u_1 ,\dots, u_{i-1}) = & H\left(u_i|u_1,\dots,u_{i-1}\right)  - \nonumber\\
                      & H\left(u_i|u_1,\dots,u_{i-1},x_i\right) \nonumber\\
     = &\log\left(1+\frac{\sigma^2_{x_i|u_1,\dots,u_{i-1}}}{\sigma^2_{z_i}}\right),
\end{align}
with $H(\cdot|\cdot)$ standing for the conditional entropy and $\sigma^2_{x_i|u_1,\dots,u_{i-1}}$  for the conditional variance of the $i$-th observation given all the previous data available at the FC. Hence, by taking equality in \eqref{eq:Rate1}, the variance of the encoding noise reads
\begin{align}
\sigma^2_{z_i}=\frac{\sigma^2_{x_i|u_1,\dots,u_{i-1}}}{\mathrm{e}^{\textstyle\sum\limits_{j=i}^{i+d-1}R_{j,i}}-1}.
\label{eq:varEncodingNoise}
\end{align}
In each time slot, the FC produces an optimal Minimum Mean Squared Error (MMSE) estimate of the observations which, as discussed earlier, exploits all the preceding $u_{i}$, namely
\begin{align}
 \hat{x}_i=\mathbb{E}\left[x_i|u_1,\dots,u_{i}\right], \quad\quad i=1,\dots, K.
\label{eq:decoding_MSE}
\end{align}
The distortion in the reconstruction of $x_i$ thus reads:
\begin{align}
D_i = \sigma^2_{x_i|u_1,\dots,u_{i}},
\end{align}
which, in turn, can be expressed as\footnote{With some abuse of notation, in the summation interval we write $k+d-1$. Still, we restrict such summations to the valid range of timeslot values, namely, $\max\{k+d-1,K\}$.} (see Appendix \ref{app:Distortion}, for a detailed derivation)
\begin{align}
D_i     =\sigma^2_x
        \biggl(
        \left(1-\rho\right)\sum^{i}_{j=2}
        &\rho^{i-j}
        \mathrm{e}^{-\textstyle\sum_{k=j}^{i}\sum_{l=k}^{k+d-1}R_{l,k}}
        \biggr.\nonumber\\
         \biggl. +&\rho^{i-1}
        \mathrm{e}^{-\textstyle\sum_{k=1}^{i}\sum_{l=k}^{k+d-1}R_{l,k}}
        \biggr).
        \label{eq:iDist}
\end{align}

\section{Minimization of the Average Distortion: Delay-Constrained Scenario}
\label{sec:MinDistd1}

Here, samples must be encoded, transmitted and reconstructed before the next time slot starts (i.e., $d=1$). The particularization of the channel capacity constraint \eqref{eq:CapacityConstraint} thus reads
\begin{align}
R_i \leq \log\left(1+|h_i|^2 p_i\right),
\label{eq:CapacityConstraintd1}
\end{align}
where $R_{i}$ stands for the transmission rate which is assigned to the $i$-th source in the $i$-th time slot \emph{only} (i.e., no summation of rates over subsequent time slots). Likewise, the rate-distortion constraint \eqref{eq:Rate1} can be particularized to
\begin{align}
R_{i}&  \geq I(x_i ; u_i | u_1 ,\dots, u_{i-1}),
\label{eq:Rated1}
\end{align}
From all the above, the reconstruction distortion in
\eqref{eq:iDist} simplifies to
\begin{align}
D_i     =\sigma^2_x
        \biggl(
        \left(1-\rho\right)\sum^{i}_{j=2}
        \rho^{i-j}
        \mathrm{e}^{-\textstyle\sum\limits_{k=j}^{i}R_{k}}
        +\rho^{i-1}
        \mathrm{e}^{-\textstyle\sum\limits_{k=1}^{i}R_{k}}
        \biggr).
        \label{eq:iDistd1}
\end{align}
Our goal is to find the optimal power $\{p_i\}$ and rate $\{R_i\}$ allocation that minimize the average distortion given by \eqref{eq:iDistd1} subject to the energy causality constraint of \eqref{eq:ECC} and the capacity constraint of \eqref{eq:CapacityConstraint}. Unfortunately, due to the coupling (over time slots) of the rates in the exponential terms of \eqref{eq:iDistd1}, this optimization problem cannot be solved analytically. To circumvent this, we define the \textit{cumulative rates} $r_{ij}$ as $r_{ij}\triangleq\sum_{k=j}^{i}R_{k}$, for $i=1,\dots,K,j=1,\dots,i$. By doing so, the optimization problem can be posed as:
\begin{subequations}
\begin{align}
   \min_{\substack{\{p_i\},\\ \{R_i\}, \\\{r_{ij}\} }}              \quad   &   \frac{\sigma^2_x}{K}
                                \sum^{K}_{i=1}
                                \biggl(
                                \left(1-\rho\right)\sum^{i}_{j=2}
                                \rho^{i-j}
                                \mathrm{e}^{-r_{ij}}
                                +
                                \rho^{i-1}
                                \mathrm{e}^{-r_{i1}}
                                \biggr)
                \label{eq:Distd1}\\
   \mathrm{s.t.}        \quad   &  r_{ij}=\sum\limits_{k=j}^{i}R_{k},                \quad i=1,\dots,K, j=1,\dots,i \label{eq:DistRateDefd1}\\
                \quad &  R_i \leq \log\left(1+|h_i|^2 p_i\right),           \quad i=1,\dots,K, \label{eq:DistRateCapd1}\\
                \quad &  \sum_{j=1}^{i}p_{j}\leq\sum_{j=1}^{i}E_{j},        \quad i=1,\dots,K, \label{eq:DistEnergyCausalityd1}\\
                \quad &  -p_{i} \leq 0,                         \quad i=1,\dots,K, \label{eq:PositivePowerd1}\\
                \quad &  -R_{i} \leq 0,                         \quad i=1,\dots,K, \label{eq:PositiveRated1} \\
                \quad &  -r_{ij} \leq 0,                            \quad i=1,\dots,K, j=1,\dots,i \label{eq:PositiveCumulativeRated1}
\end{align}
\label{eq:OptProblemd1}
\end{subequations}
where the optimization is with respect to variables $\{p_i\}$, $\{R_i\}$ and, also, $\{r_{ij}\}$ (this follows from the introduction of the additional constraint \eqref{eq:DistRateDefd1} associated to the definition of cumulative rates). Since the objective function \eqref{eq:Distd1} is convex and the constraints \eqref{eq:DistRateDefd1}-\eqref{eq:PositiveCumulativeRated1} define a convex feasible set, the optimization problem \eqref{eq:OptProblemd1} is convex and, thus, has a global solution \cite{boyd2009convex}. By satisfying the Karush-Kuhn-Tucker (KKT) conditions, we identify the necessary and sufficient conditions for optimality. The Lagrangian of \eqref{eq:OptProblemd1} reads
\begin{align}
\mathcal{L}     &=\frac{\sigma^2_x}{K}\sum^{K}_{i=1}
                \left(
                \left(1-\rho\right)\sum^{i}_{j=2}\rho^{i-j}\mathrm{e}^{-r_{ij}}+\rho^{i-1}\mathrm{e}^{-r_{i1}}
                \right)                                                                 \nonumber\\
        &\quad  + \sum\limits_{i=1}^{K}\sum\limits_{j=1}^{i}\lambda_{ij}\left(r_{ij} - \sum\limits_{k=j}^{i}R_{k}\right)        \nonumber\\
        &\quad  + \sum\limits_{i=1}^{K}\mu_{i}\left(R_i - \log\left(1+|h_i|^2 p_i\right)\right)                     \nonumber\\
        &\quad  + \sum\limits_{i=1}^{K}\beta_{i}\left(\sum_{j=1}^{i}p_{j} - \sum_{j=1}^{i}E_{j}\right)              \nonumber\\
        &\quad  - \sum\limits_{i=1}^{K}\eta_{i}p_{i} - \sum\limits_{i=1}^{K}\phi_{i}R_{i} - \sum\limits_{i=1}^{K}\sum\limits_{j=1}^{i}\delta_{i}r_{ij},
\end{align}
where $\{\mu_i\} \geq 0$, $\{\beta_i\} \geq 0$, $\{\eta_i\} \geq 0$, $\{\phi_{i}\} \geq 0$, $\{\delta_{ij}\} \geq 0$ and $\{\lambda_{ij}\}$ stand for the corresponding Lagrange multipliers. The additional complementary slackness conditions are given by
\begin{align}
    \mu_{i}\left(R_i - \log\left(1+|h_i|^2 p_i\right)\right)=0,         &\quad \forall i,   \label{eq:slackIndividualRatesd1}\\
    \beta_{i}\left(\sum_{j=1}^{i}p_{j} - \sum_{j=1}^{i}E_{j}\right)=0,  &\quad \forall i,   \label{eq:slackECCd1}\\
    \eta_{i}p_{i}=0,    &\quad \forall i,                                   \label{eq:slackPowerConstraintd1}\\
    \phi_{i}R_{i}=0,    &\quad \forall i,                                   \label{eq:slackRateConstraintd1} \\
    \delta_{i}r_{ij}=0, &\quad \forall i, j.                                    \label{eq:slackRateConstraintd1}
\end{align}
Finally, by taking the derivative of the Lagrangian with respect to $p_i$, $R_i$, $r_{ij}$ and letting them be equal to zero we the set of stationarity conditions follow, namely,
\begin{equation}
    \frac{\partial\mathcal{L}}{\partial p_i} = - \frac{\mu_i |h_i|^2}{1+|h_i|^2 p_i}+\sum\limits_{j=i}^{K}\beta_{j}-\eta_i=0,
    \label{eq:powerLagrangiand1}
\end{equation}
\begin{equation}
    \frac{\partial\mathcal{L}}{\partial R_i} = -\sum\limits_{k=i}^{K}\sum\limits_{j=1}^{i}\lambda_{kj}+\mu_{i}-\phi_{i}=0,
    \label{eq:rateMultipliersd1}
\end{equation}
\begin{equation}
\frac{\partial\mathcal{L}}{\partial r_{ij}}=
\begin{cases}
   -\frac{\sigma^2_x}{K}\rho^{i-j}\mathrm{e}^{-r_{ij}}+\lambda_{ij}-\delta_{ij}=0,                 &   \text{if } j=1, \\
   -\frac{\sigma^2_x}{K}\left(1-\rho\right)\rho^{i-j}\mathrm{e}^{-r_{ij}}+\lambda_{ij}-\delta_{ij}=0,      &   \text{if } j\neq 1 .
\end{cases}
    \label{eq:rateLagrangiand1}
\end{equation}

\subsection{Optimal Power Allocation}

\begin{figure}[t!]
    \centering
    \subfigure[Two-dimensional directional waterfilling.]
    {
        \includegraphics[width=0.8\columnwidth]{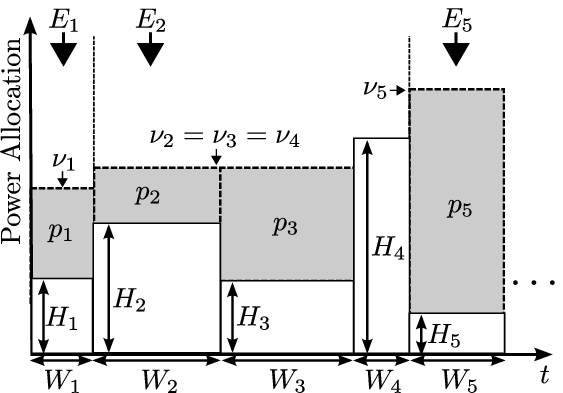}
        \label{fig:WaterFilling}
    }
    \subfigure[Reverse waterfilling with multiple waterlevels.]
    {
        \includegraphics[width=0.8\columnwidth]{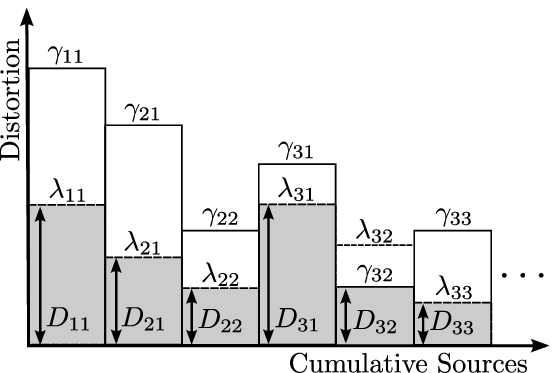}
        \label{fig:ReverseWaterFilling}
    }
    \caption{Optimal power and cumulative rate allocation.}
    \label{fig:WFandRWFInterpretation}
\end{figure}

From the stationarity conditions on $p_{i}$ \eqref{eq:powerLagrangiand1} and $R_i$ \eqref{eq:rateMultipliersd1}, and the slackness conditions of \eqref{eq:slackRateConstraintd1}, the optimal power allocation follows:
\begin{equation}
    p_{i}^{\star}=\left[\frac{\sum\limits_{k=i}^{K}\sum\limits_{j=1}^{i}\lambda_{kj}}{\sum\limits_{j=i}^{K}\beta_{j}}-\frac{1}{|h_i|^2}\right]^{+},
    \quad\quad i=1,\dots,K,
    \label{eq:optPowerAllocationd1}
\end{equation}
where $[ \cdot ]^{+}=\max\{\cdot,0\}$. This solution can be interpreted as a two-dimensional directional waterfilling, as shown in Fig. \ref{fig:WaterFilling}. For each time slot $i$, we have a rectangle of solid material of width $W_i \triangleq\sum_{k=i}^{K}\sum_{j=1}^{i}\lambda_{kj}$ and height $H_i \triangleq 1\big/\left(|h_i|^2\sum_{k=i}^{K}\sum_{j=1}^{i}\lambda_{kj}\right)$. Right-permeable taps are placed in time slots with energy arrivals. Water is consequently poured up to a waterlevel $\nu_i \triangleq 1/\sum_{j=i}^{K}\beta_{j}$. The resulting power allocation corresponds to the area of water above the solid rectangle.

\begin{figure*}[t!]
\newcounter{tmpCounter} 
\setcounter{tmpCounter}{\value{equation}}  
\setcounter{equation}{29} 
\begin{equation}
    \left\{
    \begin{aligned}
    r_{ii}=\sum_{j=i}^{i+d-1}R_{j,i}                        ,i=1,\ldots,K                       \\
    \sum\limits_{j=i-d+1}^{i}R_{i,j} \leq \log\left(1+|h_i|^2 p_i\right)    ,i=1,\ldots,K                       \\
    R_{i,j}\geq 0                                       ,i=1,\ldots,K, j=i-d+1,\ldots,i
    \end{aligned}
    \right\}
\equiv
    \left\{
    \begin{aligned}
     r_{ij} =\sum\limits_{k=j}^{i}r_{kk}                        ,i=1,\ldots,K, j=1,\ldots,i-1\\
     r_{ij} \leq \sum\limits_{k=j}^{d+i-1}\log\left(1+|h_k|^2 p_k\right)    ,i=1,\ldots,K, j=1,\ldots,i\\
    r_{ij} \geq 0                                       ,i=1,\ldots,K, j=1,\ldots,i
    \end{aligned}
    \right\}
    \label{eq:EquivalentSystem}
\end{equation}
\setcounter{equation}{\value{tmpCounter}} 
\hrulefill
\end{figure*}

\subsection{Optimal Rate Allocation}
Next, by solving \eqref{eq:rateLagrangiand1} for $r_{ij}$, and taking into account the corresponding slackness conditions, the optimal cumulative rate allocation can be written as
\begin{equation}
r_{ij}^{\star}=
\begin{cases}
    \left[\log\left(\dfrac{\frac{1}{K}\sigma^2_x\rho^{i-j}}{\lambda_{ij}}\right)\right]^{+},                &   \text{if } j=1, \\
    \left[\log\left(\dfrac{\frac{1}{K}\sigma^2_x\left(1-\rho\right)\rho^{i-j}}{\lambda_{ij}}\right)\right]^{+}, &   \text{if } j\neq 1.
    \label{eq:optCumulativeRatesd1}
\end{cases}
\end{equation}
From this last expression, it becomes apparent that, necessarily, $\{\lambda_{ij}\}>0$. Hence, from \eqref{eq:rateMultipliersd1}, we have that $\{\mu_{i}\}>0$ too. This implies that constraint \eqref{eq:DistRateCapd1} is satisfied with equality. Moreover, expression \eqref{eq:optCumulativeRatesd1} can be readily interpreted in terms of a \emph{reverse} water-filling solution for the reconstruction of parallel Gaussian sources \cite[Chapter 10]{cover2012elements}. To see that, we define
\begin{equation}
\gamma_{ij}=
\begin{cases}
    \frac{1}{K}\sigma^2_x\rho^{i-j},                    &   \text{if } j=1, \\
    \frac{1}{K}\sigma^2_x\left(1-\rho\right)\rho^{i-j},     &   \text{if } j\neq 1,
\label{eq:gamma_ii_defd1}
\end{cases}
\end{equation}
and
\begin{equation}
D_{ij}=
\begin{cases}
    \lambda_{ij},               &   \text{if } \lambda_{ij}<\gamma_{ij}, \\
    \gamma_{ij},            &   \text{if } \lambda_{ij}\geq\gamma_{ij}.
\end{cases}
\end{equation}
Bearing the above in mind, equation \eqref{eq:optCumulativeRatesd1} can be rewritten as
\begin{equation}
r_{ij}^{\star}=\left[\log\left(\frac{\gamma_{ij}}{D_{ij}}\right)\right]^+.
\label{eq:optCumulativeRatesSimplifiedd1}
\end{equation}
As Figure \ref{fig:ReverseWaterFilling} illustrates, this solution mimics that of a rate-distortion allocation problem for parallel Gaussian sources. However, here the allocated rates $r_{ij}^{\star}$ (and sources) are \textit{cumulative} rather than \textit{individual}; and the reverse water level given by $\lambda_{ij}$ is not constant. Besides, the numerator in the argument of \eqref{eq:optCumulativeRatesSimplifiedd1} does not only depend on the variance of the sources $\sigma^2_x$ but also on the correlation coefficient $\rho$, as \eqref{eq:gamma_ii_defd1} evidences.

Finally, by replacing \eqref{eq:optCumulativeRatesSimplifiedd1} in \eqref{eq:iDistd1}, the optimal distortion for the reconstruction of the $i$-th source reads
\begin{equation}
    D_{i}^{\star}=\sum_{j=1}^i D_{ij}.
\end{equation}
that is, it can be computed as the sum of the distortions associated to the corresponding cumulative rates.

\subsection{Optimization Algorithm}
\label{sec:MinDistd1Alg}

\begin{algorithm}[t]
    \caption{Optimal power and rate allocation for the delay-constrained case.}
    \label{alg:OptAlgorithmd1}
    \begin{algorithmic}[1]
        \State \textbf{Initialize:} $\{\lambda_{ij}^{(t)}\}:=0$.
        \State \textbf{Step 1:} For all $i$, allocate power.
    \State $p_{i}^{(t+1)}  := \left[\dfrac{\sum\limits_{k=i}^{K}\sum\limits_{j=1}^{i}\lambda_{kj}^{(t)}}
        {\sum\limits_{j=i}^{K}\beta_j}-\dfrac{1}{|h_i|^2}\right]^{+}$
        \State \textbf{Step 2:}  For all $i, j$, cumulative rate allocation.
            \State $r_{ij}^{(t+1)}  := \left[\log\left(\dfrac{\gamma_{ij}}{\lambda_{ij}^{(t)}}\right)\right]^{+}$
        \State \textbf{Step 3:}  For all $i, j$, update multiplier.
            \State $\lambda_{ij}^{(t+1)}  :=
            \biggl[\lambda_{ij}^{(t)}+ \alpha\biggl(r_{ij}^{(t+1)} - \sum\limits_{k=j}^{i}\log\left(1+|h_k|^2 p_k^{(t+1)}\right)\biggr)\biggr]
            $
        \State \textbf{Step 4:} Go to Step 1 until stopping criteria is met.
    \end{algorithmic}
\end{algorithm}

As discussed in the previous section, the optimal power \eqref{eq:optPowerAllocationd1} and cumulative rate \eqref{eq:optCumulativeRatesd1} allocation are coupled\footnote{When dealing with hybrid power supplies (power grid and energy harvesting), two-stage waterfilling structures have also been identified in \cite{gong2013optimal, hu2015optimal}.} by the Lagrange multipliers $\lambda_{ij}$. Further, one can easily prove that problem \eqref{eq:OptProblemd1} satisfies Slater's condition, and therefore, strong duality holds \cite{boyd2009convex}. Since in these conditions the duality gap is zero, we propose to solve the corresponding \emph{dual} problem in order to determine the \emph{primal} solution (power and rates) in which we are interested. To that aim, we resort to the subgradient method \cite{bertsekas1999nonlinear} on which basis the solution to the dual problem, $\{\lambda_{ij}\}$, can be iteratively found (convergence can be guaranteed under some mild conditions). Specifically, in the $t$-th iteration, the Lagrange multipliers are updated as follows\footnote{Here, we use extended-value definitions for all functions \cite{boyd2009convex}, thus taking $+\infty$ values outside their respective domain.}:
\begin{equation}
        \lambda_{ij}^{(t+1)} := \biggl[\lambda_{ij}^{(t)}+\alpha\biggl(r_{ij}^{(t+1)} -
                     \sum\limits_{k=j}^{i}\log\left(1+|h_k|^2 p_k^{(t+1)}\right)\biggr)\biggr],
                     \label{eq:update_dual}
\end{equation}
with $\alpha$ standing for the corresponding step size. In Algorithm \ref{alg:OptAlgorithmd1}, we summarize the proposed procedure to solve the power and cumulative rate allocation problem.

The algorithm corresponds to a a subgradient ascent on the dual function. Hence, it has a convergence rate of the order of $\mathcal{O}(1/\sqrt{t})$ \cite[Chapter 8.2]{bertsekas2003convex}. Moreover, for a node, a single iteration of the algorithm will be of the order of $\mathcal{O}(K \log K)$, as it is a form of waterfilling. Namely, sorting takes $\mathcal{O}(K \log K)$ operations, while each waterfilling operation takes $\mathcal{O}(K)$ operations and there are at most $\mathcal{O}(\log K)$ waterfillings to be done, since we can compute a binary search between the water bins.

\section{Minimization of the Average Distortion: Delay-Tolerant Scenario}
\label{sec:MinDist}

Here, we address the more general case in which data is allowed to be transmitted and reconstructed within $d>1$ time slots after samples are collect and encoded. Again, to render the problem solvable, we define the \textit{cumulative rates} as
\begin{equation}
r_{ij}\triangleq \sum_{k=j}^{i}\sum_{l=k}^{k+d-1}R_{l,k}
\quad \text{for }i=1,\ldots,K,j=1,\ldots,i.
\label{eq:defCumulativeRatesPre}
\end{equation}
In delay-tolerant scenarios, each time slot conveys data from up to $d$ different sources (see Fig. \ref{fig:systemModelTX}). Hence, the number of unknowns ($\{R_{i,j}\}$), $Kd - d(d-1)/2$ in total, exceeds the number of equations given by the capacity constraints \eqref{eq:CapacityConstraint}, $K$ in total. Consequently, the system of equations becomes underdetermined. This means that, even if a unique solution exists when optimizing on the \emph{cumulative} rates $r_{ij}$ (as we discuss next), there exist \emph{multiple} solutions for the \emph{individual} rates. Thus, we propose to solve the optimization problem in terms of cumulative rates (only), and then define some criteria to select one solution in terms of individual rates (this will be further elaborated in Section \ref{sec:NumRes} ahead).

To start with, we need to rewrite not only (i) the objective function given by \eqref{eq:iDist}; but, also, (ii) the set of constraints, in terms of cumulative rates. The latter can be accomplished by expressing the cumulative rates in the following recursive form:
\begin{equation}
r_{ij}\triangleq
\begin{cases}
    \sum_{k=j}^{i}r_{kk},           &   \text{if } j\neq i \\
    \sum_{k=j}^{j+d-1}R_{k,j},      &   \text{if } j=i
\end{cases}
\label{eq:defCumulativeRates}
\end{equation}
for $i=1,\ldots,K,j=1,\ldots,i$, and then resorting to Fourier-Motzkin elimination \cite{schrijver1998theory}.  Further, we prove that the system obtained by Fourier-Motzkin elimination is equivalent.

\begin{proposition}
\label{prop:equivalence}
The systems of inequalities in \eqref{eq:EquivalentSystem}  are equivalent when solving optimization problem \eqref{eq:OptProblem}. That is, the set of variables $\{p_i\}$,$\{R_{i,j}\}$ and $\{r_{ij}\}$ satisfy the constraints on the left hand side of \eqref{eq:EquivalentSystem} if and only if they satisfy the constraints on the right hand side of \eqref{eq:EquivalentSystem}.
\end{proposition}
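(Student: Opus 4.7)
The plan is to verify the equivalence by proving both implications, which amounts to certifying that the RHS is exactly the projection (via Fourier--Motzkin elimination) of the LHS feasible set onto the variables $\{p_i\},\{r_{ij}\}$. Throughout I rely on the base identity $r_{ii}=\sum_{j=i}^{i+d-1}R_{j,i}$ together with the recursion $r_{ij}=\sum_{k=j}^{i}r_{kk}$ inherited from \eqref{eq:defCumulativeRates}.

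For the direction LHS $\Rightarrow$ RHS, the equalities $r_{ij}=\sum_{k=j}^{i}r_{kk}$ follow immediately by unfolding the recursive definition. For the upper bound, I would expand
\[
r_{ij}=\sum_{k=j}^{i}\sum_{l=k}^{k+d-1}R_{l,k},
\]
and swap the order of summation so as to group terms by the transmission slot index $l\in\{j,\ldots,d+i-1\}$. For each such $l$, the relevant $R_{l,k}$ terms form a sub-collection of those appearing in the capacity constraint for slot $l$, so their partial sum is bounded by $\log(1+|h_l|^2 p_l)$. Summing over $l$ then yields $r_{ij}\leq\sum_{l=j}^{d+i-1}\log(1+|h_l|^2 p_l)$, while $r_{ij}\geq 0$ is immediate from $R_{l,k}\geq 0$.

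For the direction RHS $\Rightarrow$ LHS, given $\{r_{ij}\}$ and $\{p_i\}$ satisfying the RHS system, I would constructively exhibit non-negative rates $\{R_{l,k}\}$ satisfying $r_{kk}=\sum_{l=k}^{k+d-1}R_{l,k}$ and the slotwise capacity inequalities. A natural choice is a greedy sequential allocation: process sources $k=1,\ldots,K$ in order and distribute each demand $r_{kk}$ among the admissible slots $l\in\{k,\ldots,k+d-1\}$, consuming the residual capacity of each slot as it becomes available. The RHS cumulative bounds then guarantee, by induction on $k$, that the aggregate residual capacity within any window of admissible slots is always sufficient to absorb the next demand.

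The main obstacle is precisely this reverse direction: one must confirm that the RHS constraints are not merely necessary but \emph{sufficient}, i.e.\ that Fourier--Motzkin elimination has not dropped any essential inequality. Concretely, the greedy scheme must never run out of capacity prematurely, which is where the window-wise cumulative bounds on $r_{ij}$ are used in full strength. A careful inductive accounting of how much capacity has been consumed by sources $1,\ldots,k-1$ in the contiguous window of slots $j,\ldots,d+i-1$, compared against the RHS bound on $r_{ij}$, closes the gap and completes the equivalence.
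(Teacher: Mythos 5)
Your proposal is correct, but the converse direction follows a genuinely different route from the paper's. For LHS~$\Rightarrow$~RHS, your swap of the order of summation (grouping the terms of $r_{ij}=\sum_{k=j}^{i}\sum_{l=k}^{k+d-1}R_{l,k}$ by transmission slot $l$ and bounding each group by the slot capacity) is in fact tighter than the paper's derivation, which bounds each individual rate by the full slot capacity and then sums; both reach \eqref{eq:RHS2}, but yours does so without over-counting. For RHS~$\Rightarrow$~LHS, the paper proceeds very differently: it forces the capacity constraints to equality, writes the system as $\mathbf{A}\mathbf{x}=\mathbf{b}$, and invokes Farkas' lemma, ruling out the infeasibility alternative by a linear-programming vertex argument that ultimately relies on the fact that the \emph{optimal} solution of \eqref{eq:OptProblem} satisfies $\sum_{i}r_{ii}=\sum_{i}c_i$; this is why the proposition is phrased ``when solving optimization problem \eqref{eq:OptProblem}.'' Your greedy sequential allocation instead treats the converse as a transportation-feasibility problem in which source $k$ may be served by the slot window $\{k,\dots,k+d-1\}$: since the windows are intervals, the Hall/Gale--Hoffman feasibility conditions reduce exactly to the window-wise cuts $\sum_{k=j}^{i}r_{kk}\leq\sum_{l=j}^{i+d-1}c_l$, which are precisely \eqref{eq:RHS2}, and an earliest-slot-first greedy cannot stall without violating one of them. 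This buys you something the paper's argument does not: equivalence at \emph{every} feasible point of the RHS system, not only at the optimum, and no appeal to duality. The one thing you should still write out is the stalling argument itself (if the greedy fails at source $i$, take the smallest $j$ such that slots $j,\dots,i+d-1$ are all saturated, observe that no source with index below $j$ can have contributed to those slots under earliest-first filling, and derive $\sum_{k=j}^{i}r_{kk}>\sum_{l=j}^{i+d-1}c_l$); as submitted, that inductive accounting is asserted rather than carried out, but it is standard and closes cleanly.
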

\begin{proof}
See Appendix \ref{app:Equivalence}.
\end{proof}

Finally, in order to pose the optimization problem, it suffices to include the corresponding \emph{energy harvesting} constraints of \eqref{eq:ECC} too, namely
\begin{subequations}
\begin{align}
   \min_{\substack{\{p_i\}, \\\{r_{ij}\} }}     \quad   &   \frac{\sigma^2_x}{K}
                                    \sum^{K}_{i=1}
                                    \biggl(
                                    \left(1-\rho\right)\sum^{i}_{j=2}
                                    \rho^{i-j}
                                    \mathrm{e}^{-r_{ij}}
                                    +
                                    \rho^{i-1}
                                    \mathrm{e}^{-r_{i1}}
                                    \biggr)
                \label{eq:Dist}\\
   \mathrm{s.t.}        \quad   &  r_{ij}=\sum\limits_{k=j}^{i}r_{kk},                       i=1,\dots,K, j=1,\dots,i-1 \label{eq:DistRateDef}\\
                \quad &  r_{ij} \leq \sum\limits_{k=j}^{d+i-1}\log\left(1+|h_k|^2 p_k\right),   \nonumber\\
                    & i=1,\dots,K, j=1,\dots,i \label{eq:DistRateCap}\\
                \quad &  \sum_{j=1}^{i}p_{j}\leq\sum_{j=1}^{i}E_{j},                \quad i=1,\dots,K, \label{eq:DistEnergyCausality}\\
                \quad &  -p_{i} \leq 0,                                 \quad i=1,\dots,K, \label{eq:PositivePower}\\
                \quad &  -r_{ij} \leq 0,                                    \quad i=1,\dots,K, j=1,\dots,i  \label{eq:PositiveRate}
\end{align}
\label{eq:OptProblem}
\end{subequations}
where, clearly, the optimization is now with respect to variables $\{p_i\}$ and $\{r_{ij}\}$. Differently from Section \ref{sec:MinDistd1}, constraint \eqref{eq:DistRateDef} guarantees, on the one hand, that the cumulative rates satisfy definition \eqref{eq:defCumulativeRates}. On the other, constraint \eqref{eq:DistRateCap} enforces the cumulative rates to satisfy the per time slot channel capacity constraint.

The optimization problem \eqref{eq:OptProblem} is convex and can be solved in closed-form by (i) computing the Lagrangian function:
\begin{align}
\mathcal{L}     &=\frac{\sigma^2_x}{K}\sum^{K}_{i=1}
                \left(
                \left(1-\rho\right)\sum^{i}_{j=2}\rho^{i-j}\mathrm{e}^{-r_{ij}}+\rho^{i-1}\mathrm{e}^{-r_{i1}}
                \right)                                                             \nonumber\\
        &\quad  + \sum\limits_{i=1}^{K}\sum\limits_{j=1}^{i}\mu_{ij}\left(r_{ij} - \sum\limits_{k=j}^{i}r_{kk}\right)       \nonumber\\
        &\quad  + \sum\limits_{i=1}^{K}\sum\limits_{j=1}^{i}\lambda_{ij}\left(r_{ij} - \sum\limits_{k=j}^{d+i-1}\log\left(1+|h_i|^2 p_i\right)\right) \nonumber\\
        &\quad  + \sum\limits_{i=1}^{K}\beta_{i}\left(\sum_{j=1}^{i}p_{j} - \sum_{j=1}^{i}E_{j}\right)              \nonumber\\
        &\quad  - \sum\limits_{i=1}^{K}\eta_{i}p_{i} - \sum\limits_{i=1}^{K}\sum\limits_{j=1}^{i}\delta_{ij}r_{ij}.
\end{align}
with $\{\lambda_{ij}\} \geq 0$, $\{\beta_i\} \geq 0$, $\{\eta_i\} \geq 0$, $\{\delta_{i}\} \geq 0$; and $\{\mu_{ij}\}$ standing for the corresponding Lagrangian multipliers; and (ii) satisfying the Karush-Kuhn-Tucker (KKT) conditions that follow from the Lagrangian. Along the lines of Section \ref{sec:MinDistd1}, the optimal power allocation reads,
\begin{equation}
    p_{i}^{\star}=\left[\frac{\sum\limits_{k=i-d+1}^{K}\sum\limits_{l=1}^{k}\lambda_{kl}}{\sum\limits_{j=i}^{K}\beta_{j}}-\frac{1}{|h_i|^2}\right]^{+},
    \quad i=1,\dots,K.
    \label{eq:optPowerAllocation}
\end{equation}
The optimal power allocation for the (more general) delay-tolerant scenario admits again a two-dimensional directional waterfilling interpretation. Differently from the delay-constrained scenario, the width and height of the solid rectangle, namely $W_i \triangleq \sum_{k=i-d+1}^{K}\sum_{l=1}^{k}\lambda_{kl}$ and $H_i \triangleq 1\big/\left(|h_i|^2\sum_{k=i-d+1}^{K}\sum_{l=1}^{k}\lambda_{kl}\right)$, have an explicit dependence on $d$, the maximum latency\footnote{As expected, these expressions simplify to the ones for the delay-constrained scenario for $d=1$.}.

Along the lines of the preceding section, the optimal cumulative rates for the delay-constrained scenario follow:
\begin{equation}
r_{ij}^{\star}=\left[\log\left(\dfrac{\gamma_{ij}}{\lambda_{ij}+\bar{\mu}_{ij}}\right)\right]^{+}.
    \label{eq:optCumulativeRates}
\end{equation}
where we have defined
\begin{equation}
\bar{\mu}_{ij}=
\begin{cases}
    -\sum\limits_{k=i}^{K}\sum\limits_{\substack{l=1 \\ l\neq k}}^{i}\mu_{kl},          &   \text{if } i=j, \\
    \mu_{ij},                                                   &   \text{if } i\neq j.
\label{eq:gamma_ii_def}
\end{cases}
\end{equation}
and $\gamma_{ij}$ is given by \eqref{eq:gamma_ii_defd1}. Again, this solution can be interpreted in terms of a classical reverse waterfilling scheme.
\begin{algorithm}[t]
    \caption{Optimal power and rate allocation for the delay-tolerant case.}
    \label{alg:OptAlgorithm}
    \begin{algorithmic}[1]
        \State \textbf{Initialize:} $\{\lambda_{ij}^{(t)}\}:=0$, $\{\mu_{ij}^{(t)}\}:=0$.
        \State \textbf{Step 1:} For all $i$, allocate power.
    \State $p_{i}^{(t+1)} := \left[\dfrac{\sum\limits_{k=i-d+1}^{K}\sum\limits_{l=1}^{k}\lambda_{kl}^{(t)}}
        {\sum\limits_{j=i}^{K}\beta_j}-\dfrac{1}{|h_i|^2}\right]^{+}$
        \State \textbf{Step 2:}  For all $i, j$, cumulative rate allocation.
            \State $r_{ij}^{(t+1)} := \left[\log\left(\dfrac{\gamma_{ij}}{\lambda_{ij}^{(t)}+\bar{\mu}_{ij}^{(t)}}\right)\right]^{+}$
        \State \textbf{Step 3:}  For all $i, j$, update multipliers.
            \State $\lambda_{ij}^{(t+1)} :=
            \biggl[\lambda_{ij}^{(t)}+\alpha_{\lambda}\biggl(r_{ij}^{(t+1)} - \biggr.\biggr.$ \\
            $\quad\quad\quad\quad\quad \biggl.\biggl. \sum\limits_{k=j}^{d+i-1}\log\left(1+|h_k|^2 p_k^{(t+1)}\right)\biggr)\biggr]^{+}$
            \State $\mu_{ij}^{(t+1)} :=
            \biggl[\mu_{ij}^{(t)}+\alpha_{\mu}\biggl(r_{ij}^{(t+1)} - \sum\limits_{k=j}^{i}r_{kk}^{(t+1)}\biggr)\biggr]$
        \State \textbf{Step 4:} Go to Step 1 until stopping criteria is met.
    \end{algorithmic}
\end{algorithm}

As in Section \ref{sec:MinDistd1Alg}, we solve the corresponding dual problem by resorting to the subgradient method. However, now \emph{both} dual variables $\lambda_{ij}$ and $\mu_{ij}$ must be updated as follows
\begin{equation}
        \lambda_{ij}^{(t+1)} := \biggl[\lambda_{ij}^{(t)}+\alpha_{\lambda}\biggl(r_{ij}^{(t+1)} - \sum\limits_{k=j}^{d+i-1}\log\left(1+|h_k|^2 p_k^{(t+1)}\right)\biggr)\biggr]^{+},
\end{equation}
\begin{equation}
        \mu_{ij}^{(t+1)} := \biggl[\mu_{ij}^{(t)}+\alpha_{\mu}\biggl(r_{ij}^{(t+1)} - \sum\limits_{k=j}^{i}r_{kk}^{(t+1)}\biggr)\biggr],
\end{equation}
where $\alpha_{\lambda}$ and $\alpha_{\mu}$ denote the corresponding step sizes. Algorithm \ref{alg:OptAlgorithm} details the proposed procedure to obtain the optimal power and cumulative rate allocation.

\section{Numerical Results}
\label{sec:NumRes}

In this section, we assess the performance of the proposed optimal power and rate allocation schemes. We are particularly interested in analyzing the impact of the correlation coefficient $\rho$ and the delay $d$ in the resulting transmission policies. For this reason, in all numerical results we have set the channel gains to a (constant) unit value. Unless otherwise stated, the simulation setup considers a system with $K=10$ time slots and an (arbitrary) energy harvesting profile with energy arrivals given by $E_1=0.2$, $E_3=0.6$, $E_6=0.8$ and $E_7=1.4$.
\begin{figure}[t]
    \centering
    \includegraphics[width=\columnwidth]{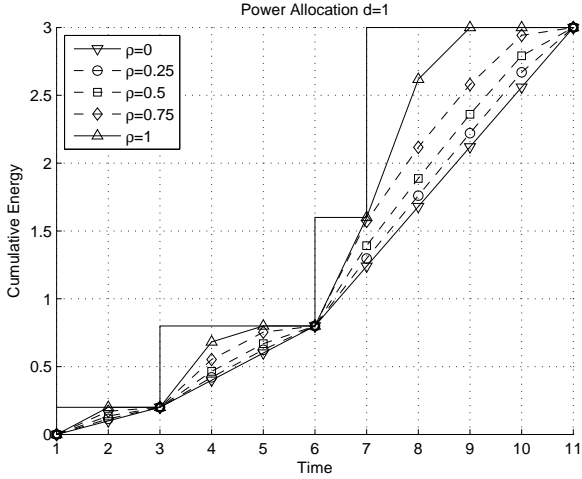} 
    \caption{Optimal power allocation for $d=1$ and varying correlation coefficient $\rho$.}
    \label{fig:PowerAllocationVarRhoD1}
\end{figure}

\subsection{Delay-Constrained Scenario ($d=1$)}

The resulting optimal power allocation policy is shown in Figure \ref{fig:PowerAllocationVarRhoD1}. For uncorrelated sources ($\rho=0$), the optimal policy turns out to be the well-known geometric solution of \cite{yang2012optimal} and \cite{zafer2009calculus}. This corresponds to the tightest string below the cumulative energy harvesting curve connecting the original and the total harvested energy by the end of time slot $K$. However, as the correlation increases, the harvested energy tends to be spent (i.e., allocated as transmit power) sooner. As a result, in Fig. \ref{fig:PowerAllocationVarRhoD1} the slope of the energy consumption curves right after new energy arrivals (e.g., in the beginning of time slot 3) increases with $\rho$. This indicates that, in order to minimize the average distortion, one should encode the observations as accurately as possible when some new energy is made available. This stems from the fact that past observations are used here as side information at the receiver. Intuitively, the earlier an observation is accurately encoded, the more estimates (in subsequent time slots) can benefit from such an increased accuracy. This holds true even at the expense of a reduced (or zero, as in time slot 10, for $\rho=1$) transmit power being allocated to some subsequent time slots. That is, at the expense of suspending data transmission. All the above is in stark contrast with the uncorrelated case studied in \cite{yang2012optimal} where transmit power is (i) strictly positive for all time slots and (ii) a monotonically increasing function.

\begin{figure}[t]
    \centering
\includegraphics[width=\columnwidth]{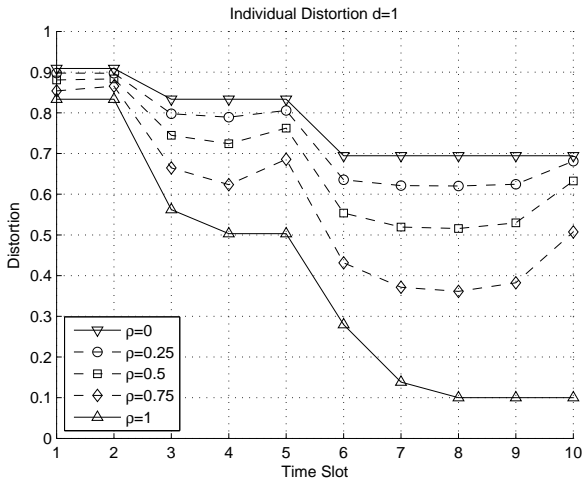}
    \caption{Individual distortion for $d=1$ and varying $\rho$.}
    \label{fig:IndividualDistortionVarRhoD1}
\end{figure}

Figure \ref{fig:IndividualDistortionVarRhoD1} depicts the reconstruction distortion for \emph{each} source (and time slot since $d=1$) associated to the optimal policy. Unsurprisingly, the higher the correlation, the more predictable the sources become and, hence, the lower the distortion (curves are shifted downwards). For correlated sources, however, distortion does not monotonically decrease with time slot index. As discussed in the previous paragraph, this stems from the \emph{anticipated} consumption of the harvested energy for the encoding of previous observations. Consequently, one can observe (i) a substantial decrease of the individual distortion for sources in time slots with energy arrivals (time slots 3, 6, and 7); and (ii) distortion upturns in time slots where the energy harvested so far has been spent or is close to (time slots 2, 5, and 10). Still, the average distortion is lower.

\subsection{Delay-Tolerant Scenario ($d>1$)}

Figure \ref{fig:PowerAllocationVarDRho08} illustrates the impact of delay on the optimal power allocation. Interestingly, as $d$ increases the solution converges to the tightest string below the cumulative energy harvesting curve of \cite{yang2012optimal}. The intuition behind is as follows. To recall, the tightest string solution attempts to maximize the total throughput (rate) for the whole transmission period. To that aim, the sequence of transmit powers (and rates) must be monotonically increasing, that is, transmit power is higher by the end of the transmission period (i.e, last time slot(s)). For $d=1$, on the contrary, the allocated transmit power (and, thus, rate) is higher in time slots with energy arrivals, and not necessarily in the last one(s). Moreover, the source must be reconstructed \emph{immediately}, that is, after $d=1$ time slots (assuming the processing time at the FC to be negligible). In other words, there is some \emph{urgency} to allocate power (namely, spend energy). This is in stark contrast with the tightest string solution where higher power and rates can be found at the \emph{end}. Things, however, are radically different when $d$ increases. On the one hand, the deadline by which individual sources must be reconstructed is shifted $d$ time slots towards the \emph{end}. On the other, the rates (and power) needed to encode a specific source can be allocated over \emph{multiple} time slots, rather than just one. Hence, for increasing $d$ the urgency to allocate power decreases and, thus, the way in which power is allocated is more aligned with that of the tightest string solution.
\begin{figure}[t]
    \centering
    \includegraphics[width=\columnwidth]{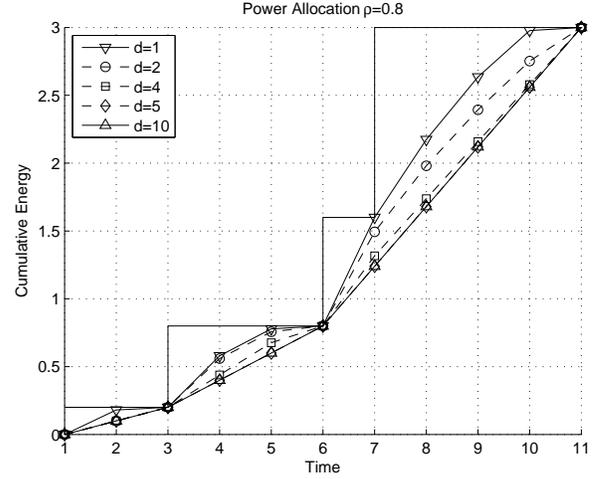}
    \caption{Optimal power allocation for $\rho=0.8$ and varying delay $d$.}
    \label{fig:PowerAllocationVarDRho08}
\end{figure}

Figure \ref{fig:AvgDistortion} depicts the average distortion as a function of delay. Clearly, the average distortion decreases with delay since the higher the delay, the higher the degrees of freedom to allocate transmit power (and, thus, spend energy in a more sensible manner). Unsurprisingly, distortion is lower for higher values of $\rho$, since the preceding (correlated) sources used as side information at the FC are more informative.
\begin{figure}[t]
    \centering
    \includegraphics[width=\columnwidth]{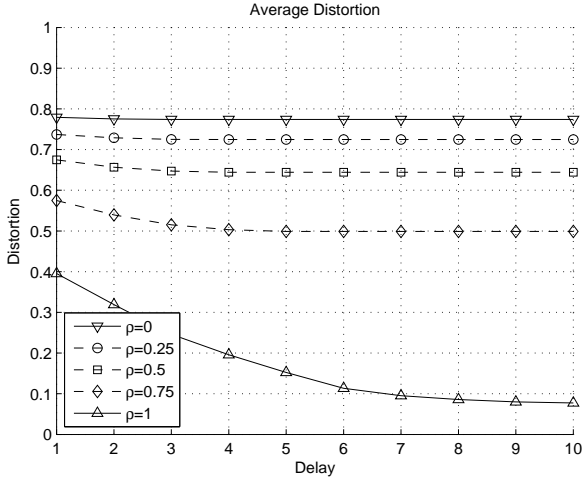}
    \caption{Average distortion vs. delay $d$ for varying correlation $\rho$.}
    \label{fig:AvgDistortion}
\end{figure}

Next, we investigate to what extent our system leverages on the knowledge on source correlation. To that aim, the rate and power allocation policy from \cite{orhan2013delay}, which was derived for a scenario with \emph{uncorrelated} sources, is used as a benchmark. Specifically, whereas the source encoding rate depends on $I(x_i ; u_i)$ (namely, the mutual information with the \emph{current} source only) the sources at the FC are reconstructed according to \eqref{eq:decoding_MSE}. Our approach, on the contrary, exploits correlation both in the encoding and decoding/reconstruction processes. Figure \ref{fig:distortionComparison} shows the normalized reduction (difference) in the average distortion attained by such benchmark and our scheme. For delay-constrained scenarios, the reduction in distortion can be as high $25\%$ for our scheme. For delay-tolerant ones ($d=10$), reduction can go up to $80\%$, which is very remarkable.

\begin{figure}[t]
    \centering

    \includegraphics[width=\columnwidth]{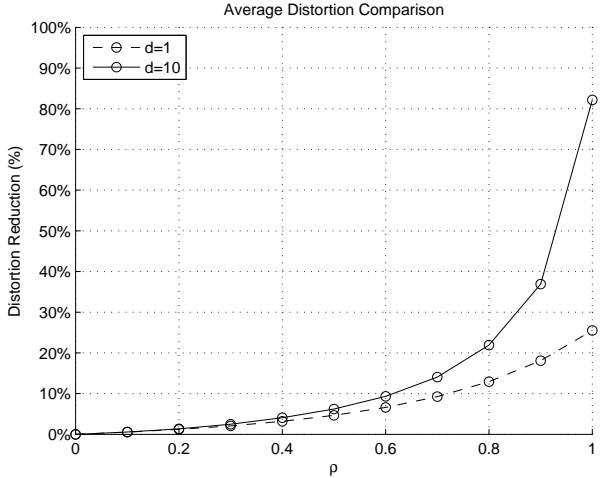}
    \caption{Reduction in average distortion.}
    \label{fig:distortionComparison}
\end{figure}

\subsection{Comparison with an Online Policy}
As discussed earlier, the proposed (offline) transmit power and rate allocation scheme requires \emph{non-causal} knowledge on energy arrivals. Here, instead, we introduce a more realistic \emph{online} version just requiring \emph{causal} knowledge. The offline scheme will be used as a benchmark.

Similar to \cite{ozel2011transmission}, a \emph{myopic}\footnote{More general online policies accounting for different degrees of availability of channel and energy state information can be also be considered (see e.g., \cite{li2015general}).} online policy can be computed as follows. Assume for a moment that, after harvesting some energy in the initial timeslot (i.e., $E_1>0$), no additional energy is harvested in subsequent timeslots. Hence, we let $E_2=\cdots=E_K=0$ and solve problem \eqref{eq:OptProblem} for $k=1,\ldots,K$. In the absence of knowledge on future energy arrivals, this is a sensible approach too. After all, distortion would be minimized should no additional energy be actually harvested. And, otherwise, we can react accordingly. Let $k_0<K$ denote the next timeslot in which some energy is harvested (i.e., $E_{k_0}>0$). For the preceding timeslots (i.e., $k=1,\ldots,k_0-1$), we force the power and rate allocations computed after the last energy arrival to remain unchanged. Hence, the unspent energy in the beginning of timeslot $k_0$ reads $E_{k_0}^u=\sum_{j=1}^{k_0-1}E_j-\sum_{j=1}^{k_0-1}p_j$. Next, we let $E_{k_0}  := E_{k_0}^u + E_{k_0}$ and $E_{k_0+1}=\cdots=E_K=0$ and, again, solve problem \eqref{eq:OptProblem} for $k=k_0,\ldots,K$. That is, we compute the optimal power and rate allocations for all subsequent timeslots. This procedure is iterated until all energy arrivals have been accounted for.

Of course, no optimality can be claimed for the resulting policy. Still, the interesting property of such scheme is its ability to adjust (re-compute) the remaining power and rate allocations every time that some energy is harvested. By doing so, the additional (and causal) knowledge on energy arrivals is effectively exploited.

Figure \ref{fig:online} illustrates the performance of the offline and online policies vs. the intensity rate of energy arrivals (which are modeled as a Poisson process). Unsurprisingly, the distortion of the offline versions turns out to be a lower bound of that attained by online ones. For a given intensity rate, the additional distortion associated to the online version can be regarded as moderate (some $20\%$ at an energy arrival rate equal to 1, $\rho=0.2$, and $d=1$). Interestingly, the online version requires a $40\%$ increase of the intensity rate to achieve the same distortion as its offline counterpart (for the same operating point). The distortion gap becomes narrower for delay-tolerant scenarios ($d=10$) and wider in percentage for scenarios with high correlation (see $\rho=0.8$ curves) or when the intensity rate increases.

\begin{figure}[t]
    \centering
    \includegraphics[width=\columnwidth]{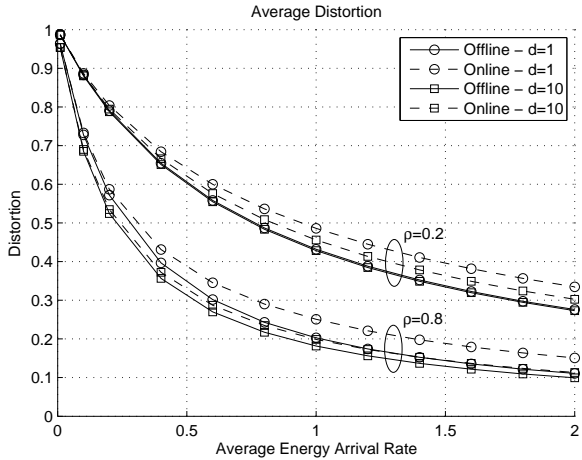}
    \caption{Average distortion for the offline and online policies, for low ($\rho=0.2$) and high ($\rho=0.8$) correlation, and delay-constrained ($d=1$) and delay-tolerant ($d=10$) scenarios.}
    \label{fig:online}
\end{figure}

\subsection{Convergence}

Next, we investigate the convergence properties of the proposed scheme. Specifically, in Figure \ref{fig:RelDistortion} we depict the relative error $\varepsilon$ between the average distortion at iteration $t$ and its optimal value, namely, $\varepsilon=\left|D_{avg}^{\star}-D_{avg}^{(t)}\right|/D_{avg}^{\star}$. For the update of the dual variables in \eqref{eq:update_dual}, we have used a time-varying step size\footnote{The step size used is $\alpha^{(t)}=\bar{\alpha}^{(t)}/\lVert g^{(t)}\rVert_{2}$, where $g^{(t)}$ is the corresponding subgradient and $\bar{\alpha}=1/\sqrt{t}$. This diminishing step size satisfies the convergence conditions given by $\bar{\alpha}^{(t)}\geq0$, $\lim_{t\to\infty}\bar{\alpha}^{(t)}=0$ and  $\sum_{t=1}^{\infty}\bar{\alpha}^{(t)}=\infty$ \cite{bertsekas1999nonlinear}.}. Clearly, convergence is slower for larger $d$ values. This stems from the fact that, for delay-tolerant scenarios, the search space for the solution is larger, as the summation in equation \eqref{eq:DistRateCap} evidences.

\begin{figure}[t]
    \centering
\includegraphics[width=\columnwidth]{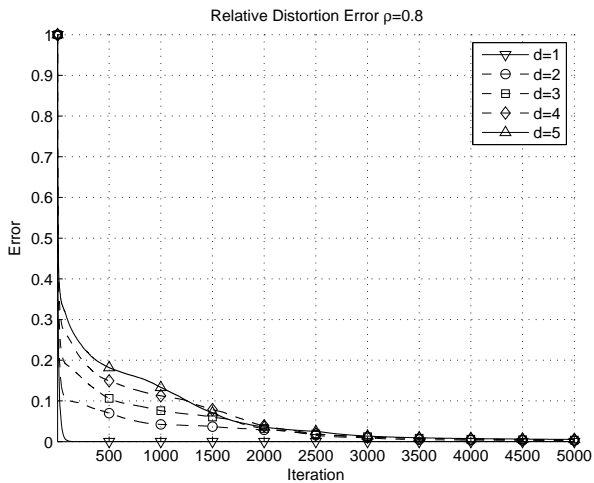} 
    \caption{Relative distortion error for $\rho=0.8$ and varying $d$.}
    \label{fig:RelDistortion}
\end{figure}

\subsection{Allocation of Individual Rates}

As discussed in Section \ref{sec:MinDist}, for $d>1$ there exists an infinite number of solutions for the allocation of the \emph{individual} rates (the system of equations is underdetermined). In order to get some insight on how individual rates are allocated, we will select the solution with the lowest 2-norm. This, clearly, penalizes solutions with very large (dissimilar) rates.

After solving the optimization problem \eqref{eq:OptProblem} and determining the optimal cumulative rates $r_{i,j}$, we find the individual rates $R_{i,j}$ by solving:
\begin{subequations}
\begin{align}
   \min_{\substack{\{R_{i,j}\} }}       \quad   &   \left(\sum_{i=1}^{K}\sum_{j=i-d+1}^{i}R^2_{i,j}\right)^{1/2}
                \\
   \mathrm{s.t.}        \quad   & r_{ii}=\sum_{j=i}^{i+d-1}R_{j,i},                     i=1,\ldots,K                        \\
                    & \sum\limits_{j=i-d+1}^{i}R_{i,j} \leq \log\left(1+|h_i|^2 p_i\right) ,    i=1,\ldots,K                        \\
                    & R_{i,j}\geq 0,                                    i=1,\ldots,K, j=i-d+1,\ldots,i
\end{align}
\label{eq:OptProblemIndivudalRates}
\end{subequations}
To that aim, we need to use $r_{ii}$ and $p_i$ from the solution of the (cumulative) rate and power allocation problem as an input (see first and second inequality constraints in the problem above).

Figure \ref{fig:individualRates} shows the allocation of \emph{individual} rates over the $d$ time slots for each source (a different color is used for each source). We consider scenarios with sources exhibiting low ($\rho=0.2$) and high ($\rho=0.8$) correlation. Interestingly enough, the higher the correlation, the lower the spread of individual rates over time slots (fewer sources in each time slot). This is consistent with the fact that, as discussed earlier, for low $\rho$ (and $d=1$) energy tends to be spent sooner. Accordingly, in delay-tolerant scenarios where the encoded data is transmitted in a number of time slots, when correlation is high the first time slots are favored.

\begin{figure}[t]
    \centering
    \includegraphics[width=\columnwidth]{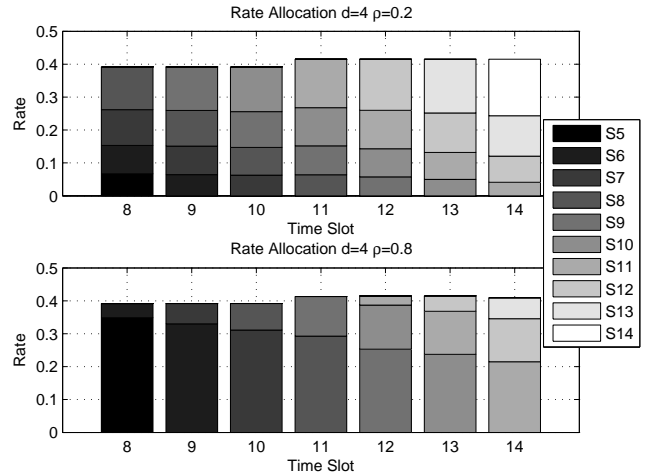}
    \caption{Allocation of individual rates for sources with low (top) and high (bottom) correlation ($K=20$; $d=4$; energy profile: $E_{1}=0.2$, $E_{2}=1$, $E_{4}=0.6$, $E_{6}=1$, $E_{7}=0.8$, $E_{8}=0.2$, $E_{9}=0.4$, $E_{11}=1.4$, $E_{13}=0.6$, $E_{14}=0.6$, $E_{16}=0.8$, $E_{17}=0.2$, $E_{18}=1$, $E_{19}=0.2$, and $E_{20}=0.4$).}
    \label{fig:individualRates}
\end{figure}

\section{Conclusions}
\label{sec:Conclusions}

In this paper, we have investigated the impact of source correlation in the design of point-to-point optimal transmission policies with energy-harvesting sensors. We have considered both delay-constrained delay-tolerant scenarios. In both cases, our goal was to minimize the average distortion in the decoded (reconstructed) observations by using data from previously encoded sources as side information. We have formulated the problems in a convex optimization framework. Besides, we have proposed an iterative procedure, based on the subgradient method, to solve both problems. Interestingly, the procedure entails the interaction of a directional and reverse water-filling schemes in each iteration. For the delay-constrained scenario, numerical results revealed that, differently from the uncorrelated case, minimizing the average distortion implies encoding observations as accurately as possible upon energy arrivals. This holds true even if the transmit power allocated to subsequent time slots is lower or, eventually, zero (and, thus, an increase in distortion in such time slots). For the delay-tolerant scenario, we have observed that as delay increases, the power allocation policy converges to the tightest string below the cumulative energy harvesting curve. And, also, that the average distortion decreases. In comparison with other schemes not exploiting correlated sources as side information, ours attains an average distortion which is substantially lower (with reductions of up to $25\%$ or $80\%$ for $d=1$ and $d=10$, respectively. We have also proposed a myopic online policy exhibiting a moderate performance gap (some $20\%$ for low correlation and delay-constrained scenarios) with respect to the offline (optimal) policy. Besides, we have found that the time needed for the algorithm to convergence is higher for delay-tolerant scenarios since the search space is substantially larger there. Finally, we have observed that for delay-tolerant scenarios, the higher the correlation, the lower the spread of individual rates over time slots.

\appendices

\section{Derivation of the Average Distortion in \eqref{eq:iDist}}
\label{app:Distortion}

For compactness, hereinafter we let $R_{i}\triangleq\sum_{j=i}^{i+d-1}R_{j,i}$ denote the rate assigned to the $i$-th source over its $d$ time slots; and $D_i = \sigma^2_{x_i|u_1,\ldots,u_i}$ the distortion for the $i$-th source which can be recursively expressed as \cite{kay1993fundamentals}:
\begin{align*}
\sigma^2_{x_k|u_{1},\ldots,u_k} &=\sigma^2_{x_k|u_1,\ldots,u_{k-1}}-\frac{\text{cov}^2(u_k,x_k|u_{1},\ldots,u_{k-1})}{\sigma^2_{u_k|u_{1},\ldots,u_{k-1}}}.
\end{align*}
We prove by induction that
\begin{align}
D_i     =\sigma^2_x
        \biggl(
        \left(1-\rho\right)\sum^{i}_{j=2}
        &\rho^{i-j}
        \mathrm{e}^{-\textstyle\sum_{k=j}^{i}\sum_{l=k}^{k+d-1}R_{l,k}}
        \biggr.\nonumber\\
         \biggl. +&\rho^{i-1}
        \mathrm{e}^{-\textstyle\sum_{k=1}^{i}\sum_{l=k}^{k+d-1}R_{l,k}}
        \biggr) \nonumber \\
        =\sigma^2_x
        \biggl(
        \left(1-\rho\right)\sum^{i}_{j=2}
        &\rho^{i-j}
        \mathrm{e}^{-\textstyle\sum_{k=j}^{i}R_{k}}
        \biggr.\nonumber\\
         \biggl. +&\rho^{i-1}
        \mathrm{e}^{-\textstyle\sum_{k=1}^{i}R_{k}}
        \biggr).
        \label{eq:distInd}
\end{align}
We start by showing this expression holds for the base case ($i=1$). That is
\begin{align*}
D_1 &
    = \sigma^2_{x_1|u_1} \\
    &= \sigma^2_x-\frac{\left(\sigma^2_x\right)^2}{\sigma^2_x+\sigma^2_{z_1}}
    = \sigma^2_x-\frac{\left(\sigma^2_x\right)^2}{\sigma^2_x+\dfrac{\sigma^2_x}{e^{R_1}-1}}
    = \sigma^2_x e^{-R_1},
\end{align*}
which satisfies expression \eqref{eq:distInd}. For the inductive step, assume expression \eqref{eq:distInd} is true for $i=n$. Then consider
\begin{align*}
D_{n+1} &=  \sigma^2_{x_{n+1}|u_{1},\ldots,u_{n+1}} \\
 &=\sigma^2_{x_{n+1}|u_1,\ldots,u_{n}}-\sigma^2_{x_{n+1}|u_1,\ldots,u_{n}}\left(1-e^{-R_{n+1}}\right) \\
 &= \sigma^2_{x_{n+1}|u_1,\ldots,u_{n}}e^{-R_{n+1}} \\
 &= \sigma^2_{\sqrt{\rho}x_n + w_n|u_1,\ldots,u_{n}}e^{-R_{n+1}} \\
 &= \left(\rho\sigma^2_{x_n|u_1,\ldots,u_{n}} +\sigma^2_{w_n|u_{1},\ldots,u_n}\right)e^{-R_{n+1}} \\
 &= \left(\rho D_n+\sigma^2_x\left(1-\rho\right)\right)e^{-R_{n+1}} \\
 &= D_n \rho e^{-R_{n+1}} +\sigma^2_x\left(1-\rho\right)e^{-R_{n+1}}
\end{align*}
Then by the induction hypothesis we have
\begin{align*}
D_{n+1}   =&\sigma^2_x
        \biggl(
        \left(1-\rho\right)\sum^{n}_{j=2}
        \rho^{n-j}
        \mathrm{e}^{-\textstyle\sum_{k=j}^{n}R_{k}}
        \biggr.\nonumber\\
         \biggl. +&\rho^{n-1}
        \mathrm{e}^{-\textstyle\sum_{k=1}^{n}R_{k}}
        \biggr)
        \rho e^{-R_{n+1}}+\sigma^2_x\left(1-\rho\right)e^{-R_{n+1}}
\end{align*}
and by rearranging terms we have
\begin{align*}
D_{n+1}    =\sigma^2_x
        \biggl(
        \left(1-\rho\right)\sum^{n+1}_{j=2}
        &\rho^{n+1-j}
        \mathrm{e}^{-\textstyle\sum_{k=j}^{n+1}R_{k}}
        \biggr.\nonumber\\
         \biggl. +&\rho^{n+1-1}
        \mathrm{e}^{-\textstyle\sum_{k=1}^{i+1}R_{k}}
        \biggr).
\end{align*}
Thus, expression \eqref{eq:distInd} holds for $i=n+1$. Therefore, by the principle of induction, expression \eqref{eq:distInd} holds for all $i$.

\section{Proof of Proposition \ref{prop:equivalence}.}
\label{app:Equivalence}

\begin{proof}
For notational convenience, let $c_i \triangleq\log(1+|h_i|^2p_i)$ denote the channel capacity in the $i$-th timeslot. First, we focus on the direct proof. Assuming that the LHS of \eqref{eq:EquivalentSystem}, which is given by the system of inequalities
\begin{subequations}
\begin{align}
	\textstyle\sum_{j=i}^{i+d-1}R_{j,i} &=r_{ii},		&  i=1,\ldots,K,&                   			\label{eq:LHS1}\\
	\textstyle\sum_{j=i-d+1}^{i}R_{i,j} &\leq c_i,    &  i=1,\ldots,K,&           					\label{eq:LHS2}\\
	R_{i,j}&\geq 0,                                       			&  i=1,\ldots,K,& j=1,\ldots,i-d+1,	\label{eq:LHS3}
\end{align}
\label{eq:EquivalentSystemLHS}
\end{subequations}
has a solution in terms of individual rates $R_{i,j}$, our goal is to find the system of inequalities in the RHS of \eqref{eq:EquivalentSystem}, namely
\begin{subequations}
\begin{align}
	\textstyle r_{ij} &\textstyle =\sum_{k=j}^{i}r_{kk},				&  i=1,\ldots,K, j=1,&\ldots,i-1,		\label{eq:RHS1}\\
	\textstyle r_{ij} &\textstyle \leq \sum_{k=j}^{i+d-1}c_k,		&  i=1,\ldots,K, j=1,&\ldots,i,          \label{eq:RHS2}\\
	r_{ij} &\geq 0,                                       				&  i=1,\ldots,K, j=1,&\ldots,i.			\label{eq:RHS3}
\end{align}
\label{eq:EquivalentSystemRHS}
\end{subequations}
The constraints \eqref{eq:RHS1} follow directly from the definition of the cumulative rates \eqref{eq:defCumulativeRates}. Constraint \eqref{eq:RHS3} is also straightforward since, from its definition in \eqref{eq:defCumulativeRatesPre}, the cumulative rates $r_{i,j}$ can be expressed as a summation of \emph{non-negative} (see \eqref{eq:LHS3}) individual rates $R_{i,j}$. As for \eqref{eq:RHS2}, note that from \eqref{eq:LHS2} each \emph{non-negative} individual rate can be upper-bounded as follows
\begin{align}
	R_{i,j}&\leq c_i,                                       			&  i=1,\ldots,K, j=1,\ldots,i-d+1.	\label{eq:IndividualCapacity}
\end{align}
Next, by direct substitution of the bounds \eqref{eq:IndividualCapacity} into the equalities \eqref{eq:LHS1}, we have that
\begin{align}
	\textstyle r_{ii}&\textstyle \leq\sum_{j=i}^{i+d-1}c_{j},		&  i=1,\ldots,K                   			\label{eq:IndividualCapacity2}
\end{align}
Finally, by substitution of \eqref{eq:IndividualCapacity2} into the definition of cumulative rates \eqref{eq:defCumulativeRates}, inequality \eqref{eq:RHS2} follows.

Consider now the converse. Assume that the RHS of \eqref{eq:EquivalentSystem}, which is also given by the system of inequalities \eqref{eq:EquivalentSystemRHS}, has a solution in terms of cumulative rates $r_{ij}$. Then, we want to prove that there exists a non-empty set of individual rates $R_{i,j}$ satisfying the inequalities \eqref{eq:EquivalentSystemLHS} (i.e., the RHS of \eqref{eq:EquivalentSystem} has a solution even if it might not be unique, as discussed earlier). To prove that, we focus on the more restrictive case where we force the capacity constraint \eqref{eq:LHS2} to be satisfied with equality. Hence, the first two constraints in \eqref{eq:EquivalentSystemRHS} become:
\begin{subequations}
\begin{align}
	\textstyle\sum_{j=i}^{i+d-1}R_{j,i} &=r_{ii},	&  i=1,\ldots,K                   	\label{eq:Equality1}\\
	\textstyle\sum_{j=i-d+1}^{i}R_{i,j} &=c_i,    &  i=1,\ldots,K.           			\label{eq:Equality2}
\end{align}
\label{eq:EqualitySystem}
\end{subequations}
The system of equations above can be rewritten in matrix form:
\begin{equation}
\mathbf{A}\mathbf{x}=\mathbf{b}
\label{eq:EqualitySystemMatrixForm}
\end{equation}
where we have defined the column vectors $\mathbf{x}\triangleq[R_{1,1},R_{2,1},\ldots,R_{K,K}]^T$ and $\mathbf{b}\triangleq[r_{1,1},\ldots,r_{K,K},c_{1},\ldots,c_{K}]^T$, and where matrix $\mathbf{A}$ is given by the $\{0,1\}$ entries yielding the summations in \eqref{eq:EqualitySystem}. Next, we resort to Farkas' lemma:
\begin{lemma}[Farkas' Lemma {[21]}]
	If $\mathbf{A} \in \mathbb{R}^{m \times n}$ and $\mathbf{b} \in \mathbb{R}^m$, then exactly one of the following holds:
	\begin{enumerate}[(i)]
		\item There exists $\mathbf{x} \in \mathbb{R}^n$ such that $\mathbf{A}\mathbf{x}=\mathbf{b}$ and $\mathbf{x} \geq 0$.
		\item There exists $\mathbf{y} \in \mathbb{R}^m$ such that $\mathbf{y}^T\mathbf{A} \geq 0$ and $\mathbf{y}^T\mathbf{b}<0$.
	\end{enumerate}
\end{lemma}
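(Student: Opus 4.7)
The plan is to establish Farkas' Lemma via the separating hyperplane theorem applied to the finitely generated convex cone $C \triangleq \{\mathbf{A}\mathbf{x} : \mathbf{x} \geq 0\} \subseteq \mathbb{R}^m$. First I would dispose of the easy half, namely that (i) and (ii) cannot both hold. Indeed, assume for contradiction that both do: $\mathbf{A}\mathbf{x} = \mathbf{b}$ with $\mathbf{x} \geq 0$, and $\mathbf{y}^T\mathbf{A} \geq 0$ with $\mathbf{y}^T\mathbf{b} < 0$. Chaining, $\mathbf{y}^T\mathbf{b} = \mathbf{y}^T\mathbf{A}\mathbf{x} = (\mathbf{y}^T\mathbf{A})\mathbf{x}$ is a sum of products of nonnegative quantities and therefore $\geq 0$, directly contradicting $\mathbf{y}^T\mathbf{b} < 0$.

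For the substantive direction, I would suppose (i) fails, i.e., $\mathbf{b} \notin C$, and derive (ii). Granting momentarily that $C$ is closed (and convex, which is immediate), the separating hyperplane theorem yields a nonzero $\mathbf{y} \in \mathbb{R}^m$ and a scalar $\alpha$ with $\mathbf{y}^T\mathbf{b} < \alpha \leq \mathbf{y}^T\mathbf{z}$ for all $\mathbf{z} \in C$. Two cone-theoretic observations sharpen this: since $\mathbf{0} \in C$, we must have $\alpha \leq 0$, whence $\mathbf{y}^T\mathbf{b} < 0$; and since $\mathbf{z} \in C$ implies $t\mathbf{z} \in C$ for every $t > 0$, a negative value of $\mathbf{y}^T\mathbf{z}$ anywhere on $C$ would force $\inf_{\mathbf{z}\in C}\mathbf{y}^T\mathbf{z} = -\infty$, contradicting the lower bound $\alpha$. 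Hence $\mathbf{y}^T\mathbf{z} \geq 0$ throughout $C$; specializing to $\mathbf{z} = \mathbf{A}\mathbf{e}_j$ (which lies in $C$ since $\mathbf{e}_j \geq 0$) produces $\mathbf{y}^T\mathbf{A} \geq 0$ componentwise, delivering (ii).

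The main technical obstacle, and the step I expect to require the most care, is verifying that $C$ is closed in $\mathbb{R}^m$; this is a known fact about finitely generated cones but is not obvious from the definition. The plan is a Carath\'eodory-style reduction: any $\mathbf{z} \in C$ can be written as a nonnegative combination of \emph{linearly independent} columns of $\mathbf{A}$ (if a given nonnegative representation involves linearly dependent columns, one can drive one coefficient down to zero while preserving nonnegativity of the remaining ones, and iterate). Consequently $C = \bigcup_{S} \mathbf{A}_S\, \mathbb{R}^{|S|}_{\geq 0}$, where $S$ ranges over the finitely many index sets selecting linearly independent columns of $\mathbf{A}$. For each such $S$, the map $\mathbf{A}_S$ is injective and hence a homeomorphism onto its range $\mathbf{A}_S(\mathbb{R}^{|S|})$, a linear subspace of $\mathbb{R}^m$ and thus closed. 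The image of the closed orthant $\mathbb{R}^{|S|}_{\geq 0}$ under this homeomorphism is closed in the range and, the range being closed in $\mathbb{R}^m$, closed in $\mathbb{R}^m$ as well. A finite union of closed sets being closed, $C$ itself is closed.

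Combining the two halves yields the ``exactly one'' dichotomy: either $\mathbf{b} \in C$, giving (i), or $\mathbf{b} \notin C$, giving (ii); and the first paragraph rules out both holding at once. The only subtle ingredient is the closedness of the finitely generated cone, which I handle through the Carath\'eodory reduction above; everything else is a direct unpacking of the separating hyperplane theorem and the cone structure of $C$.
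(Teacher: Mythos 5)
The paper does not prove this lemma at all: it quotes Farkas' Lemma with a citation to reference [21] and uses it as a black box inside the proof of Proposition \ref{prop:equivalence}, so there is no in-paper argument to compare yours against. What you have written is a correct, self-contained proof of the cited result, and it is the standard one: the exclusivity half via the chain $\mathbf{y}^T\mathbf{b}=(\mathbf{y}^T\mathbf{A})\mathbf{x}\geq 0$, and the existence half by separating $\mathbf{b}$ from the closed convex cone $C=\{\mathbf{A}\mathbf{x}:\mathbf{x}\geq 0\}$, then using $\mathbf{0}\in C$ and positive homogeneity to sharpen the separation to $\mathbf{y}^T\mathbf{b}<0$ and $\mathbf{y}^T\mathbf{z}\geq 0$ on all of $C$, with $\mathbf{z}=\mathbf{A}\mathbf{e}_j$ extracting the componentwise inequality. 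You correctly flag the one genuinely delicate step, the closedness of a finitely generated cone, without which the separating hyperplane theorem cannot be applied in the form you need; your Carath\'eodory reduction to nonnegative combinations of linearly independent columns and the decomposition of $C$ into a finite union of closed images of orthants handles it properly (just note that $\mathbf{0}\in C$ separately, or allow the empty index set in the union). The only contextual remark is that your argument sits upstream of what the paper actually does: in Appendix \ref{app:Equivalence} the authors take the dichotomy for granted and instead work to rule out alternative (ii) for their particular $0$--$1$ matrix $\mathbf{A}$ by a linear-programming computation, so your proof supplements rather than replaces the appendix.
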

where the inequality $\mathbf{x} \geq 0$ is defined element-wise. Clearly, alternative (i) in the Farkas lemma states that, if it holds, a solution to the LHS in terms of individual rates exists. In the next paragraphs, we prove (by contradiction) that alternative (ii) does \emph{not} hold for our problem. To that aim, we start by defining $\mathbf{y}\triangleq[k_{r_{11}},\ldots,k_{r_{KK}},k_{c_{1}},\ldots,k_{c_{K}}]^T$. Assume that alternative (ii) holds. To satisfy the condition $\mathbf{y}^T\mathbf{A} \geq 0$, there must exist a nonnegative set of coefficients $k_{r_{ii}}$ and $k_{c_{i}}$ such that
\begin{align}
k_{r_{ii}}+k_{c_{j}} \geq 0,				\quad  i=1,\ldots,K, j=i,\ldots,i+d-1.					\label{eq:FarkasIneq2}
\end{align}
And, condition $\mathbf{y}^T\mathbf{b}<0$ can be rewritten as
\begin{align}
	\textstyle \sum_{i=1}^{K} k_{r_{ii}} r_{ii} + \sum_{i=1}^{K} k_{c_{i}} c_{i}  < 0.         \label{eq:FarkasIneq1}
\end{align}
Next, we will check that for any set of valid $k_{r_{ii}}$ and $k_{c_{j}}$ equation \eqref{eq:FarkasIneq1} does not hold. To that aim, we will determine the lowest possible value of the LHS of \eqref{eq:FarkasIneq1} subject to the inequalities given by \eqref{eq:FarkasIneq2}. In other words, we need to solve an optimization (minimization) problem with the LHS of \eqref{eq:FarkasIneq1} playing the role of the objective function and \eqref{eq:FarkasIneq2} as constraints. Since $r_{ii}$ and $c_i$ are nonnegative, this is a linear program. Hence, the solution will lie at the vertex of the feasible region defined by \eqref{eq:FarkasIneq2} \mbox{\cite[Chapter 7]{schrijver1998theory}}. Since the expressions \eqref{eq:FarkasIneq2} define a convex cone, its only vertex is given by
\begin{align}
k_{r_{ii}}+k_{c_{j}} = 0,				\quad  i=1,\ldots,K, j=i,\ldots,i+d-1.					\label{eq:FarkasIneq5}
\end{align}
By recursively analyzing the various equations in \eqref{eq:FarkasIneq5}, we conclude that necessarily
\begin{align}
k\triangleq k_{r_{ii}}=-k_{c_{i}},				\quad  i=1,\ldots,K.										\label{eq:FarkasIneq3}
\end{align}
That is, except for the sign, all the coefficients are identical. By replacing \eqref{eq:FarkasIneq3} into the LHS of \eqref{eq:FarkasIneq1}, the objective function in the optimization problem becomes
\begin{align}
	\textstyle k\sum_{i=1}^{K} r_{ii} - k \sum_{i=1}^{K} c_{i}.           									\label{eq:FarkasIneq4}
\end{align}
From \eqref{eq:RHS1} and \eqref{eq:RHS2}, we have that $\sum_{i=1}^K r_{ii} \leq \sum_{i=1}^K c_i$. That is, the sum of cumulative rates of all sources is \emph{below or equal to} the channel capacity over all time slots. However, since the objective function in \eqref{eq:OptProblem} is nonincreasing in all $r_{ij}$, the optimal solution of \eqref{eq:OptProblem} must satisfy $\sum_{i=1}^K r_{ii} = \sum_{i=1}^K c_i$ (i.e., with equality). This means that, necessarily, \eqref{eq:FarkasIneq4} is lower bounded by 0, hence, \eqref{eq:FarkasIneq1} does not hold and, in turn, alternative (ii) in the Farkas theorem does not hold either. This concludes the proof.
\end{proof}

\bibliographystyle{IEEEtran}
\bibliography{bib}

\end{document}